
\documentclass[letterpaper, 10 pt, conference]{ieeeconf}  

\IEEEoverridecommandlockouts                              
\overrideIEEEmargins

\usepackage{cite}
\usepackage{url}

\usepackage{graphicx}
\usepackage{algorithm}
\usepackage[noend]{algorithmic}
\usepackage{subfig}
\usepackage{amssymb, amsmath,graphicx,charter, latexsym}
\usepackage{enumerate}

\newtheorem{definition}{Definition}
\newtheorem{lemma}{Lemma}

\newtheorem{theorem}{Theorem}


\title{\LARGE \bf
Real-Time Stochastic Processing Networks with Concurrent Resource Requirements
}

\author{I-Hong Hou and Rahul Singh
\thanks{This material is based upon work partially supported by NSF under Contracts CNS-1035378, CNS-1035340, and CCF-0939370, and USARO under Contract Nos. W911NF-08-1-0238 and W-911-NF-0710287.}
\thanks{I-Hong Hou is with Computer Engineering and Systems Group, Department of ECE, Texas A\&M University, College Station, TX 77843, USA.
        {\tt\small ihou@tamu.edu}}%
\thanks{Rahul Singh is with Computer Engineering and Systems Group, Department of ECE, Texas A\&M University, College Station, TX 77843, USA.
        {\tt\small rsingh1@tamu.edu}}%
}
\begin{document}
\maketitle
\thispagestyle{empty}
\pagestyle{empty}

\begin{abstract}
Stochastic Processing Networks (SPNs) can be used to model communication networks, manufacturing systems, service systems, etc. We consider a real-time SPN where tasks generate jobs with strict deadlines according to their traffic patterns. Each job requires the concurrent usage of some resources to be processed. The processing time of a job may be stochastic, and may not be known until the job completes. Finally, each task may require that some portion of its tasks to be completed on time.

In this paper, we study the problem of verifying whether it is feasible to fulfill the requirements of tasks, and of designing scheduling policies that actually fulfill the requirements. We first address these problems for systems where there is only one resource. Such systems are analog to ones studied in a previous work, and, similar to the previous work, we can develop sharp conditions for feasibility and scheduling policy that is feasibility-optimal. We then study systems with two resources where there are jobs that require both resources to be processed. We show that there is a reduction method that turns systems with two resources into equivalent single-resource systems. Based on this method, we can also derive sharp feasibility conditions and feasibility-optimal scheduling policies for systems with two resources.
\end{abstract}

\section{Introduction}  \label{section:introduction}

Stochastic Processing Networks (SPNs), which were proposed by Harrison \cite{mh02, mh03}, consider systems where jobs compete for resources to be processes. SPNs provide a more general model than queueing networks in that they allow a job to require the concurrent usage of multiple resources to be processed. With the more general and flexible model, SPNs can be used to describe a large range of systems, including communication systems, manufacturing systems, service systems, etc.

Most existing work on SPNs, such as Jiang and Walrand \cite{lj09} and Dai and Lin\cite{jgd05}, focuses on stabilizing the system, that is, making the number of unfinished jobs in the system bounded. On the other hand, little is known about the delay of each job. As more and more applications of SPNs require hard delay bound for each job, it becomes increasingly important to address the issue of per-job delay.

There has been much work on providing per-job delay bounds in real-time systems. Liu and Layland \cite{cll73} has considered the scheduling problem for providing per-job delay guarantees in  a single-processor environment, and has proposed the well-known \emph{earliest deadline first} scheduling policy. The system considered in \cite{cll73} can be thought of as a special case of SPNs where there is only one resource in the system and the processing time of each job is given. Recently, there is an emerging theory \cite{IH09, IHH09MobiHoc} for real-time wireless communications, where clients compete for the shared wireless channel, and the unreliable nature of wireless transmissions is considered. Such a system corresponds to SPNs where there is only one resource, time is slotted, and the processing time of each job is stochastic and unknown.

In this paper, we study real-time SPNs for continuous-time systems. We consider the fact that a job may require multiple resources to be processed, as in the general model of SPNs. We also consider the delay bound of each job, and that the processing time of each job is stochastic and unknown, as in \cite{IH09, IHH09MobiHoc}. Further, we assume that each task requires at least some portion of its jobs to be completed on time. Based on this model, we study the problem of verifying whether it is feasible to fulfill the requirements of all tasks in the system, and, if the system is feasible, designing scheduling policies that actually fulfill all the requirements.

We first study the case where there is only one resource in the system. We, similar to \cite{IH09, IHH09MobiHoc}, derive sharp conditions for feasibility and a scheduling policy that is feasibility-optimal in that it fulfills all feasible systems.

We then study a special case where there are two resources in the system, and some jobs may require both resources to be processed. We show that there is a reduction method that transforms this system to an equivalent single-resource system. Therefore, we can also obtain sharp feasibility conditions and a feasibility-optimal scheduling policy for the two-resource system.

The rest of the paper is organized as follows: Section \ref{section:model} introduces our model for real-time SPNs. Section \ref{section:single} discusses the feasibility conditions and feasibility-optimal scheduling policy for systems with only one resource. Section \ref{section:two} proposes a reduction method for systems with two resources, and demonstrates the usage of this method by deriving feasibility conditions and scheduling policy for such systems. Finally, Section \ref{section:conclusion} concludes the paper.

\section{System Model}  \label{section:model}

In this section, we introduce our system model. The model extends one that is proposed in previous works \cite{IH09, IHH09MobiHoc}, which only consider discrete-time systems with one resource.

Consider a continuous-time system that consists of several \emph{tasks} and \emph{resources}. We denote the set of tasks by $N$, and the set of resources by $L$. Each task generates \emph{jobs} that need to be processed. Jobs of each task $n\in N$ requires the concurrent usage of some of the resources to be processed, and we denote by $L(n)$ the set of resources that jobs of task $n$ require. Each resource can only be used by one job at any point of time. Therefore, a job of task $n_1$ and a job of task $n_2$ can be processed at the same time only when $L(n_1)\cap L(n_2)=\phi$. We assume that the system is preemptive, and there is a centric server that selects a set of jobs to be processed at each point of time. Figure \ref{figure:system1} depicts one such system. In the depicted system, jobs of task 1 and of task 4 can be processed simultaneously. Jobs of task 2 and of task 4 can also be processed simultaneously. However, jobs of task 1 and of task 3 cannot be processed simultaneously, as they both require resource 1.

\begin{figure}	
\includegraphics[width=3.3in]{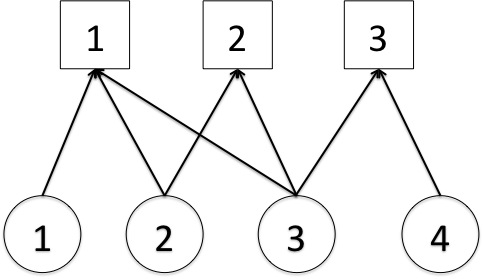}
\caption{A system with 3 resources and 4 tasks. We use circles to represent tasks and squares to represent resources. A line connecting a resource and a task indicates that the task requires the resource to be processed. In the system, we have $L(1)=\{1\},L(2)=\{1,2\}, L(3)=\{1,2,3\}$, and $L(4)=\{3\}$.}\label{figure:system1}
\end{figure}

We assume that time is divided into \emph{frames}, where the length of each frame is $T$ time units. At the beginning of each frame, each task may, or may not, generates a job. The generation of jobs of each task depends on its traffic pattern, which we assume can be modeled as an irreducible finite state Markov chain. The long-term average number of job generations per frame for each task can then be computed, and we denote it by $r_n$. We can also compute the long-term average probability that a subset $S$ of tasks generate jobs, and none of the other tasks do, in a frame, which we denote by $r(S)$.

We assume that the processing times of jobs of task $n$ are independent exponential random variables with mean $\frac{1}{\lambda_n}$. Moreover, we assume that, although we know the distribution of the processing time of a job, the exact value of the processing time cannot be known before the completion of the job. Due to the memoryless property of exponential random variable, we can conclude that, when a job of task $n$ is being processed at some point of time, the probability that it is completed in the next $\Delta t$ time units is $1-e^{-\lambda_n\Delta t}$, regardless how much time the job has been processed before.

We consider the hard delay bound of real-time SPNs. In particular, we say that each job needs to be completed within $T$ time units after it is generated. In other words, jobs that are generated at the beginning of some frame need to be completed before the end of the frame. As the processing times of jobs are stochastic, it may be the case that some jobs cannot be completed before their deadlines, in which case we say that these jobs \emph{expire}, and remove them from the system.

We measure the performance of a task by its \emph{timely-throughput}, which measures the long-term average number of completed jobs per frame for each task:
\begin{definition}
Let $e_n(k)$ be the indicator function that a job of $n$ is completed in frame $k$. The \emph{timely-throughput} of task $n$ is defined to be $\liminf_{K\rightarrow\infty}\frac{\sum_{k=1}^Ke_n(k)}{K}$.
\end{definition}

We assume that each task imposes a \emph{timely-throughput requirement}, denoted by $q_n$, and requires that $\liminf_{K\rightarrow\infty}\frac{\sum_{k=1}^Ke_n(k)}{K}\geq q_n$. Since, on average, task $n$ generates $r_n$ jobs per frame, the timely-throughput requirement is equivalent to one that requires that at least $q_n/r_n$ of task $n$'s jobs to be completed on time.

In this paper, we investigate the problem of evaluating whether it is \emph{feasible} to \emph{fulfill} a system, and of designing \emph{feasibility-optimal} policies for scheduling jobs.

\begin{definition}
A system is said to be \emph{fulfilled} by a scheduling policy if, under this policy, $\liminf_{K\rightarrow\infty}\frac{\sum_{k=1}^Ke_n(k)}{K}\geq q_n$, for all $n$.
\end{definition}

\begin{definition}
A system is \emph{feasible} if there exists some scheduling policy that fulfils it.
\end{definition}

\begin{definition}
A scheduling policy is \emph{feasibility-optimal} if it fulfils all feasible systems.
\end{definition}

\section{Feasibility Conditions and Scheduling Policy for Single-Resource Systems} \label{section:single}

In this section, we discuss a special case where there is only one resource in the system, that is, $L = \{1\}$ and $L(n) = 1$, for all $n$. Hence, at any point of time, at most one job can be processed. Figure \ref{figure:system2} shows an example of single-resource system.

\begin{figure}	
\includegraphics[width=3.3in]{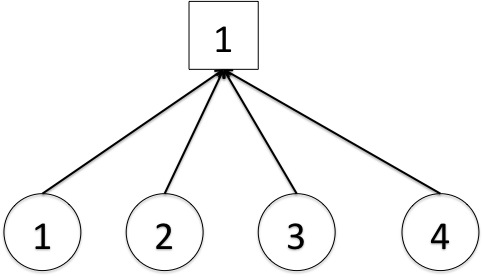}
\caption{A single-resource system.}\label{figure:system2}
\end{figure}

We derive feasibility conditions and a feasibility-optimal scheduling policy for such systems. The results presented in this section can be derived by using similar techniques as those in previous works that consider discrete-time systems. Hence, we omit all the proofs in this section.

We first observe that the timely-throughput requirement of each task poses a constraint on the long-term average amount of time per frame that jobs of the task should be processed. 

\begin{lemma}\label{l1}
The long-term average timely-throughput of task $n$ is at least $q_{n}$ jobs per frame if and only if the long-term average amount of time that jobs of task $n$ are processed is at least $w_{n} = \frac{q_{n}}{\lambda_{n}}$ per frame.
\end{lemma}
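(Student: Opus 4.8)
The plan is to prove the two quantities are equal up to a factor of $\lambda_n$, by establishing a frame-by-frame identity between the event of a completion and the processing time consumed, and then passing from that identity to the long-run averages in the statement. For each frame $k$ let $u_n(k)$ denote the total processing time the policy devotes to task $n$'s job in frame $k$ (so $u_n(k)=0$ if no job is generated), and recall $e_n(k)\in\{0,1\}$ is the completion indicator. Then $\sum_{k=1}^K e_n(k)$ counts completions and $\sum_{k=1}^K u_n(k)$ is the total processing time spent on task $n$ over the first $K$ frames, so the two long-run quantities in the lemma are $\liminf_K \frac1K\sum_{k=1}^K e_n(k)$ and $\liminf_K \frac1K\sum_{k=1}^K u_n(k)$.

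The first key step is the per-frame identity
\[
E\!\left[\,e_n(k)\mid \mathcal F_{k-1}\,\right]=\lambda_n\, E\!\left[\,u_n(k)\mid \mathcal F_{k-1}\,\right],
\]
where $\mathcal F_{k-1}$ is the history up to the start of frame $k$. To see this, condition on $\mathcal F_{k-1}$ and on the event that a job is generated (otherwise both sides vanish), and write the required processing time as $X\sim\mathrm{Exp}(\lambda_n)$. Because the residual processing time of a partially served exponential job is again $\mathrm{Exp}(\lambda_n)$, the only information the server ever obtains about $X$ while serving is the indicator $\mathbf 1\{X\le s\}$ viewed as a function of accumulated processing time $s$. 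Hence the total service applied is a bounded stopping time $\tau\le T$ for the filtration generated by this indicator, with $e_n(k)=\mathbf 1\{X\le\tau\}$ and $u_n(k)=\min(X,\tau)$. Since $\mathbf 1\{X\le s\}-\lambda_n\min(X,s)$ is a martingale in $s$ (the exponential has constant hazard $\lambda_n$), optional stopping yields $E[\mathbf 1\{X\le\tau\}]=\lambda_n E[\min(X,\tau)]$, which is exactly the claim. This step is where memorylessness is essential: it makes the completion hazard per unit of processing time the constant $\lambda_n$ no matter how service is fragmented across the frame or interrupted by expiry.

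Consequently $M_K:=\sum_{k=1}^K\bigl(e_n(k)-\lambda_n u_n(k)\bigr)$ is a martingale, and its increments are bounded because $e_n(k)\in\{0,1\}$ and $0\le u_n(k)\le T$, so $|e_n(k)-\lambda_n u_n(k)|\le\max(1,\lambda_n T)$; in particular the conditional variances are uniformly bounded, so a Kolmogorov-type condition holds. The final step is to invoke the strong law of large numbers for martingales with bounded increments to conclude $\frac1K M_K\to0$ almost surely, i.e.
\[
\frac1K\sum_{k=1}^K e_n(k)-\lambda_n\,\frac1K\sum_{k=1}^K u_n(k)\longrightarrow 0 \quad\text{a.s.}
\]
Passing $\liminf$ through this relation shows the timely-throughput equals $\lambda_n$ times the long-run per-frame processing time, so the former is at least $q_n$ if and only if the latter is at least $q_n/\lambda_n=w_n$, which is the assertion.

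I expect the main obstacle to be this last step rather than the algebra: converting the expectation-level identity into an almost-sure statement about the two $\liminf$'s requires a genuine limit theorem, and some care is needed to verify the boundedness and normalization hypotheses hold uniformly over an arbitrary, possibly history-dependent scheduling policy. By contrast, the compensator computation is routine once memorylessness reduces each frame to the optional-stopping identity for a single exponential clock.
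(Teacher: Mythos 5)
Your proof is correct, and it is essentially the continuous-time formalization of the standard Wald-type argument that the paper invokes by citation (each completed job ``costs'' $1/\lambda_n$ expected processing time, made rigorous here via the compensator $\mathbf{1}\{X\le s\}-\lambda_n\min(X,s)$, optional stopping at the bounded service time, and a martingale SLLN to pass to the almost-sure $\liminf$ statement). The paper itself gives no argument beyond ``analog to Lemma 1 in \cite{IHH09MobiHoc},'' and your write-up supplies exactly the missing details, including the one point that genuinely needs care --- that the identity holds for arbitrary non-anticipating, history-dependent policies.
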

\begin{proof}
This is analog to Lemma 1 in \cite{IHH09MobiHoc}.
\end{proof}

We call $w_n$ the \emph{implied work load} of task $n$. Since the job arrivals and processing time are random, there are times that system is forced to be \emph{idle} and does not process any jobs. This can happen either because none of the tasks generate any job in a frame, or because all jobs generated in a frame are completed before the end of the frame, and hence there are no jobs to be processed for the rest of the frame. It can be shown that the long-term average amount of time per frame that the system is idle is invariant for all \emph{work-conserving} policies.

\begin{definition}
A scheduling policy is said to be \emph{work-conserving} if it never idles whenever there is a job in the system that is not completed yet.
\end{definition}

Obviously, a policy cannot lose optimality by processing more jobs, and hence there is a feasibility-optimal policy that is work-conserving. From now on, we limit our discussions on work-conserving policies. As the amount of idle time is the same for all work-conserving policies, we can define the average amount of idle time per frame when only a subset $S$ of tasks is present in the system to be $E[I_{S}]$. Specifically, let $g_n(k)$ be the indicator function that task $n$ generates a job in frame $k$, and $t_n(k)$ be the processing time of the job of task $n$ in frame $k$, which is an exponential random variable with mean $\frac{1}{\lambda_n}$, we have \[E[I_S]:=\lim_{K\rightarrow\infty}\frac{\sum_{k=1}^KE[(T-\sum_{n\in S}g_n(k)t_n(k))^+]}{K},\]
where $x^+:=\max\{x,0\}$.

The maximum possible amount of time that the system spends processing jobs of tasks in $S$ can now be expressed as $T-E[I_{S}]$, which can be achieved by always processing jobs of tasks in $S$ prior to other jobs. Further, the total amount of implied work load of tasks in $S$ is $\sum_{n\in S}w_n$. Hence, for a system to be feasible, we require that $\sum_{n\in S}w_n\leq T-E[I_S]$, for all $S\subseteq N$. It turns out that this condition is both necessary and sufficient.

\begin{theorem}\label{l2}
A system is feasible if and only if \[\sum_{n\in S} w_{n} + E[I_{S}]\leq T,\] holds for every subset $S\subseteq N$.
\end{theorem}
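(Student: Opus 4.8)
The plan is to establish the two directions separately: necessity by a sample-path domination argument, and sufficiency by exhibiting an explicit \emph{largest-debt-first} policy and verifying it with a quadratic Lyapunov function. Throughout I restrict to work-conserving policies, which the preceding discussion shows is without loss of optimality, and I use Lemma~\ref{l1} to pass freely between the timely-throughput requirement $q_n$ and the implied-work-load requirement $w_n$.

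For necessity, fix $S\subseteq N$. On any sample path and in any frame $k$, the resource can be busy on jobs of $S$ for a total duration of at most $\min\{T,\sum_{n\in S}g_n(k)t_n(k)\}=T-(T-\sum_{n\in S}g_n(k)t_n(k))^+$, since each task-$n$ job absorbs at most $t_n(k)$ units of processing and the frame lasts only $T$. Hence, under \emph{any} work-conserving policy, the long-run average time spent serving tasks in $S$ is at most $T-E[I_S]$. If the policy fulfils the system, then by Lemma~\ref{l1} the average processing time of each $n$ is at least $w_n$, and summing over $n\in S$ gives $\sum_{n\in S}w_n\le T-E[I_S]$, which rearranges to the stated inequality.

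For sufficiency, assume the inequality holds for every $S$ and maintain for each task a debt $d_n(k)=\sum_{j=1}^{k}\big(w_n-\tau_n(j)\big)$, where $\tau_n(j)$ is the processing time granted to task $n$ in frame $j$. Consider the policy that, holding the debts fixed within a frame, always serves an available job of the task with the largest current debt. Using $V(k)=\sum_n d_n(k)^2$, the one-frame drift expands as $E[V(k+1)-V(k)\mid d(k)]=2\big(\sum_n d_n w_n-E[\sum_n d_n\tau_n\mid d(k)]\big)+O(1)$, so the whole argument reduces to showing that the debt-weighted expected service dominates the debt-weighted requirement. Order the tasks as $D_1\ge D_2\ge\cdots$ by decreasing debt and let $S_i$ be the set of the $i$ largest-debt tasks. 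Because $S_i$-jobs receive strict priority over all jobs outside $S_i$, the cumulative service to $S_i$ equals the resource's busy time on $S_i$, whose average is exactly $T-E[I_{S_i}]$; an Abel (layer-cake) summation then gives $E[\sum_n d_n\tau_n]=\sum_i(D_i-D_{i+1})\,(T-E[I_{S_i}])$ and $\sum_n d_n w_n=\sum_i(D_i-D_{i+1})\sum_{n\in S_i}w_n$. Since each coefficient $D_i-D_{i+1}\ge 0$, the feasibility inequalities $\sum_{n\in S_i}w_n\le T-E[I_{S_i}]$ applied termwise yield $\sum_n d_n w_n\le E[\sum_n d_n\tau_n]$, forcing the drift to be nonpositive up to a bounded constant. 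From $E[V(K)]=O(K)$ one concludes $E[d_n(K)]/K\to 0$, i.e.\ the long-run average processing time of each task is at least $w_n$, and hence its timely-throughput is at least $q_n$.

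The step I expect to be most delicate is identifying the single-frame conditional busy time on $S_i$ with $T-E[I_{S_i}]$, because the job generations are Markov-modulated rather than i.i.d.: the expected idle time in one frame depends on the current state of the arrival chain, whereas $E[I_{S_i}]$ is its stationary average. Making the drift computation rigorous therefore requires either averaging over the stationary distribution or, more robustly, passing to a multi-frame (fluid) drift over a block long enough for the arrival chain to mix, so that the empirical per-frame idle times converge to the $E[I_{S}]$ appearing in the feasibility condition. Upgrading the resulting $E[V(K)]=O(K)$ bound to the almost-sure $\liminf$ statement in the definition of fulfilment is the remaining routine, but nontrivial, piece.
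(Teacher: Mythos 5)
The paper itself does not prove this theorem --- it only states that the result is ``analog to Theorem 4 in \cite{IHH09MobiHoc}'' --- so your proposal is being measured against the standard argument in that line of work rather than against anything printed here. Your necessity direction is correct and complete: bounding the per-frame busy time on $S$ by $\min\{T,\sum_{n\in S}g_n(k)t_n(k)\}$, taking expectations to get $T-E[I_S]$, and invoking Lemma~\ref{l1} is exactly the right argument, and it does not even need work-conservation. Your sufficiency direction also follows the intended route (priority-by-debt plus a drift argument), and you correctly identify the Markov-modulated-arrival issue and the right fix (a multi-frame/fluid drift over a mixing block).

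There is, however, one concrete step that fails as written: the drift computation with $V(k)=\sum_n d_n(k)^2$ and the signed debts $d_n$. Your Abel summation needs every coefficient in the telescoping $D_i=\sum_{j\geq i}(D_j-D_{j+1})$ to be nonnegative, but the terminal coefficient $D_{|N|}-0$ is negative whenever some task is over-served, so the termwise application of $\sum_{n\in S_i}w_n\leq T-E[I_{S_i}]$ does not go through. Worse, the inequality you actually need for the signed Lyapunov function, $\sum_n d_n w_n\leq E[\sum_n d_n\tau_n]$, can genuinely fail under largest-debt-first: a task with very negative debt sits at the bottom of the priority order but can still receive service exceeding $w_n$ when higher-priority jobs finish early, and then $d_n(w_n-E[\tau_n])$ is the product of two negative quantities, i.e.\ a positive contribution to the drift. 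The standard repair --- and the one the paper's own Theorem~\ref{theorem:max} is signalling with the expression $E\{\gamma_n(k)d_n(k)^+\}$ --- is to truncate: use $V(k)=\sum_n\bigl(d_n(k)^+\bigr)^2$, order only the tasks with positive debt, and run the layer-cake sum over prefixes of that set, where all coefficients are nonnegative and the terms for non-positive debts vanish. With that substitution (and the block-drift device you already propose for the Markov modulation, plus the routine upgrade from $E[V(K)]=O(K)$ to the almost-sure $\liminf$), your argument is sound.
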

\begin{proof}
This is analog to Theorem 4 in \cite{IHH09MobiHoc}.
\end{proof}

Next, we propose our scheduling policy. The policy is based on the concept of \emph{time-based debt}.
\begin{definition}
Let $\gamma_n(k)$ be the amount of time that the job of task $n$ is processed in frame $k$. The \emph{time-based debt} of task $n$ at frame $k$ is defined as
 \[d_n(k):=(k-1)w_n-\sum_{j=1}^{k-1}\gamma_n(j).\]
\end{definition}

The time-based debt can be interpreted as the amount of time which the task $n$ is lagging behind that required by its implied work load. We can establish a sufficient condition for a policy to be feasibility-optimal based on the time-based debt.

\begin{theorem} \label{theorem:max}
A scheduling policy which maximizes $E\{\gamma_n(k)d_n(k)^+\}$ for all $k$ is feasibility optimal.
\end{theorem}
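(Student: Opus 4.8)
The plan is to run a Lyapunov drift argument on the time-based debt, showing that the max-weight rule keeps the debts from growing linearly, which by Lemma \ref{l1} translates into the timely-throughput guarantee. Define the quadratic Lyapunov function $V(k):=\sum_{n}\big(d_n(k)^+\big)^2$. From the recursion $d_n(k+1)=d_n(k)+w_n-\gamma_n(k)$ and the elementary inequality $\big((a+b)^+\big)^2\le (a^+)^2+2a^+b+b^2$, I would first obtain, after summing over $n$ and conditioning on the history $\mathcal{H}_k$ up to frame $k$,
\[
E[V(k+1)-V(k)\mid\mathcal{H}_k]\le 2\sum_{n}d_n(k)^+\big(w_n-E[\gamma_n(k)\mid\mathcal{H}_k]\big)+B,
\]
where $B$ is a uniform constant coming from the fact that each $w_n$ is a fixed constant and $0\le\gamma_n(k)\le T$, so all per-frame debt increments are bounded.

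The second step, which is the heart of the argument, is to show that the preemptive priority policy serving the present job of largest $d_n(k)^+$ first is exactly the policy maximizing $E[\sum_n d_n(k)^+\gamma_n(k)\mid\mathcal{H}_k]$, and that the resulting vector of expected per-frame service times is the greedy vertex of the polytope $\mathcal{R}:=\{x\ge 0:\sum_{n\in S}x_n\le T-E[I_S]\ \forall S\subseteq N\}$. This rests on recognizing $f(S):=T-E[I_S]$ as a monotone, submodular rank function with $f(\emptyset)=0$ — monotone because adding tasks can only reduce idle time, and submodular because, with per-frame idle $I_S=(T-\sum_{n\in S}g_n t_n)^+$, the marginal reduction in idle time from inserting a task exhibits diminishing returns — so that $\mathcal{R}$ is a polymatroid and the greedy (priority-by-weight) schedule attains $\max_{x\in\mathcal{R}}\sum_n d_n(k)^+x_n$. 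Concretely, under global priority the expected busy time devoted to the top-$i$ tasks equals $T-E[I_{S_i}]=f(S_i)$, so the per-task expected service times are precisely the greedy increments $f(S_i)-f(S_{i-1})$.

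Given this, feasibility does the rest. By Theorem \ref{l2}, a feasible system has $\mathbf{w}\in\mathcal{R}$, hence $\sum_n d_n(k)^+w_n\le\max_{x\in\mathcal{R}}\sum_n d_n(k)^+x_n=\sum_n d_n(k)^+E[\gamma_n(k)\mid\mathcal{H}_k]$ for the max-weight policy. The bracketed term in the drift bound is therefore nonpositive, leaving $E[V(k+1)-V(k)\mid\mathcal{H}_k]\le B$. Telescoping and taking expectations gives $E[V(K)]\le V(1)+B(K-1)=O(K)$, so $E[(d_n(K)^+)^2]=O(K)$ and $E[d_n(K)^+]=O(\sqrt K)$. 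Since $\frac1K\sum_{k=1}^K\gamma_n(k)=w_n-\frac{d_n(K+1)}{K}$, this forces $\limsup_K d_n(K)/K\le 0$ and thus $\liminf_K\frac1K\sum_{k=1}^K\gamma_n(k)\ge w_n$; by Lemma \ref{l1} the timely-throughput requirement $q_n$ is met for every task, i.e. the policy fulfills every feasible system.

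The step I expect to be the main obstacle is the second one: proving that $f(S)=T-E[I_S]$ is submodular and that the preemptive priority policy realizes the greedy vertex of the corresponding polymatroid in expectation. The submodularity must be extracted from the structure of the idle-time functional, and the Markov-modulated (rather than i.i.d.) arrivals make matching the per-frame conditional expectations to the long-run averages $E[I_S]$ delicate. A secondary technical point is strengthening the $O(K)$ bound on $E[V]$ — which only controls the debts in expectation — to an almost-sure statement about the sample-path $\liminf$, which I would handle by a subsequence-plus-Borel--Cantelli argument exploiting the bounded per-frame increments of $d_n$.
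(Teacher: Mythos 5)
The paper itself gives no proof of this theorem (it defers to Theorem 3 of \cite{IH10}), so there is nothing to compare line by line; your max-weight/Lyapunov-drift blueprint is the standard machinery behind that cited result and its overall architecture is sound. Moreover, two of the three obstacles you flag are non-issues. Submodularity of $f(S)=T-E[I_S]$ holds pointwise on every sample path: $I_S=(T-\sum_{n\in S}g_nt_n)^+$ is a convex, non-increasing function of the modular quantity $\sum_{n\in S}g_nt_n$, so its marginal decrease upon adding a task is non-increasing in $S$; expectation preserves this, and the greedy-vertex characterization of the priority policy is then standard polymatroid theory (and is exactly the content of Theorem \ref{theorem:opt}). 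The almost-sure strengthening also goes through: Chebyshev applied to $E[(d_n(K)^+)^2]=O(K)$ gives $P(d_n(K)^+\geq\epsilon K)=O(1/(\epsilon^2K))$, which is summable along $K_j=j^2$, and the bounded per-frame increments bridge gaps of size $O(j)=O(\sqrt{K_j})$, which vanish relative to $K_j$.

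The genuine gap is the single-frame drift inequality at the heart of your third step. The comparison $\sum_nd_n(k)^+w_n\leq\sum_nd_n(k)^+E[\gamma_n(k)\mid\mathcal{H}_k]$ cannot hold frame by frame: in a frame in which task $n$ generates no job, $\gamma_n(k)=0$ while $w_n>0$, and more generally the achievable set of conditional service vectors in frame $k$ is determined by the realized arrival subset (and, through the Markov-modulated traffic, by the state of the arrival chain), not by the long-run region $\mathcal{R}$, whose rank function involves the stationary averages $E[I_S]$ and $r(S)$. Theorem \ref{l2} only guarantees $\mathbf{w}\in\mathcal{R}$, i.e.\ domination by the \emph{time-averaged} service region, so the drift can be made nonpositive only on average over the arrival process. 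To close this you need either a multi-frame (block) drift over which the empirical arrival frequencies concentrate around $r(S)$, or a Lyapunov function augmented with the arrival-chain state, before the telescoping step is legitimate. As written, the claim that ``the bracketed term is therefore nonpositive'' is false for individual frames, and the $O(K)$ bound on $E[V(K)]$ does not yet follow.
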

\begin{proof}
This is analog to Theorem 3 in \cite{IH10}
\end{proof}

It turns out that there exists a simple online policy that satisfies the above condition. We call the policy the \emph{Largest Debt First} policy.
\begin{definition}
The \emph{Largest Debt First} policy computes the time-based debt for each task at the beginning of each frame and decides priorities of the tasks based on the debts. The policy gives a higher priority to a task with a higher time-based debt. The job of task $n$ is then scheduled to be processed only when the jobs of all the tasks having higher priorities than task $n$ are processed.
\end{definition}
\begin{theorem} \label{theorem:opt}
The Largest Debt First policy maximizes $E\{\gamma_n(k)d_n(k)^+\}$, and hence is feasibility-optimal.
\end{theorem}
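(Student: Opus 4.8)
The plan is to exploit the fact that, at the start of frame $k$, the debts $d_n(k)$ are completely determined by the history $\{\gamma_n(j):j<k\}$ and are therefore fixed constants throughout the frame. Consequently the quantity to be maximized, which I read as $E\{\sum_n \gamma_n(k) d_n(k)^+\}$, is a \emph{linear} functional of the within-frame allocation $\{\gamma_n(k)\}$ with nonnegative weights $d_n(k)^+$. Since jobs arrive only at the beginning of a frame, the set of present jobs is also fixed at the start, so the scheduler's only freedom is how it divides the single resource's time among these jobs. I would first reduce the problem to a per-frame, per-sample-path optimization by conditioning on the realized (but unknown to the scheduler) processing times $\{t_n\}$ of the present jobs, so that the remaining randomness is removed and the weights $d_n(k)^+$ are constants.

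For a fixed realization $\{t_n\}$, I would characterize the achievable allocations. Because the system is preemptive with a single resource, a vector $(\gamma_1,\dots)$ is realizable by some work-conserving schedule if and only if $0\le \gamma_n\le t_n$ for every $n$ and $\sum_n \gamma_n \le T$: the cap $\gamma_n\le t_n$ holds because a job leaves once its exponential requirement is met, $\sum_n\gamma_n\le T$ is just the frame length, and preemption lets us interleave service to reach any point of this polytope. Maximizing $\sum_n d_n(k)^+\gamma_n$ over this box-plus-budget polytope is exactly a fractional-knapsack problem, whose optimum is greedy: set $\gamma_n=t_n$ for the jobs in decreasing order of the weight $d_n(k)^+$ until the budget $T$ is exhausted, with the last served job possibly only partially processed. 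A one-line interchange argument—shifting service time from a lower-weight job with $\gamma>0$ to a higher-weight job with $\gamma<t_n$—confirms greedy optimality, and this greedy value is an upper bound for \emph{any} policy, even a clairvoyant one.

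The remaining step is to verify that Largest Debt First realizes this greedy solution on every sample path. LDF fixes a priority order by $d_n(k)$ at the start of the frame and serves the highest-priority present job until it completes or the frame ends, then moves to the next, and so on. Since $d_n(k)^+=d_n(k)$ for the positive-debt tasks, and LDF clears every positive-debt task before touching any task with $d_n(k)\le 0$ (which carries zero weight and cannot affect the objective), LDF processes precisely the positive-weight jobs in decreasing weight order, giving them priority for the budget $T$. This is exactly the greedy allocation, and crucially LDF achieves it \emph{without} knowing $\{t_n\}$ in advance, reacting only to observed completions. Hence, for every realization, LDF attains the pathwise upper bound, so it maximizes $E\{\sum_n \gamma_n(k)d_n(k)^+\}$ for every $k$, and feasibility-optimality follows immediately from Theorem \ref{theorem:max}.

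I expect the main obstacle to be the clean characterization of the achievable allocation polytope together with the bookkeeping for the positive part: one must check that ordering by $d_n(k)$ (what the policy actually uses) coincides, for the purpose of maximizing $\sum_n d_n(k)^+\gamma_n$, with ordering by $d_n(k)^+$, and that the possibly partially-served boundary job is handled consistently between the greedy optimum and the LDF schedule.
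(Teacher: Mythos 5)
Your argument is correct. Note that the paper itself supplies no proof of this theorem---it sits in the section whose proofs are all omitted as being derivable by the techniques of the cited discrete-time works---so there is nothing to compare against line by line; your proposal in effect fills in the missing argument. The route you take (condition on the history so that the weights $d_n(k)^+$ are fixed constants, condition further on the set of present jobs and their realized service times $t_n$, observe that the allocation of \emph{any} policy, clairvoyant or not, must lie in the polytope $\{0\le\gamma_n\le t_n,\ \sum_n\gamma_n\le T\}$, and recognize the resulting linear program as a fractional knapsack whose greedy optimum is implemented pathwise by Largest Debt First using only observed completions) is a clean and self-contained sample-path argument, and it is the natural continuous-time analogue of the interchange arguments in the discrete-time antecedents. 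The two obstacles you flag are both benign and you dispose of them correctly: for the upper bound you only need the ``only if'' half of the polytope characterization, which is immediate, and the discrepancy between ordering by $d_n(k)$ and by $d_n(k)^+$ is immaterial because every non-positive-debt task carries zero weight and is served by LDF only after all positive-debt tasks, so neither its position in the order nor the identity of the partially served boundary job affects the objective value.
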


\section{A Reduction Method for Systems with Two Resources} \label{section:two}

In this section, we discuss a special case where there are two resources in the system. We show that there exists a reduction method that transforms the system into an equivalent single-resource system. Therefore, we can obtain results for feasibility conditions and scheduling policies.

We consider a system with two resources and several tasks. Jobs of task 1 only requires resource 1 to be processed, jobs of task 2 only requires resource 2 to be processed, and all other jobs require the concurrent usage of both resource 1 and resource 2 to be processed. Therefore, at any point of time, we can either schedule a job of task 1 and a job of task 2 concurrently, or schedule a single job to be processed. We assume that each task generates one job in each frame. Figure \ref{figure:system3} shows an example of such systems.

\begin{figure}	
\includegraphics[width=3.3in]{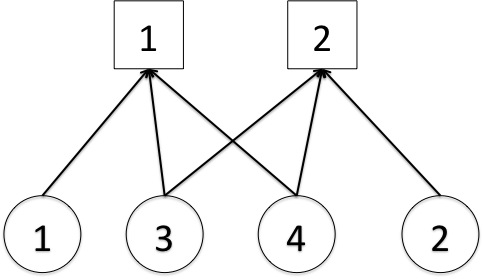}
\caption{A system with two resources.}\label{figure:system3}
\end{figure}

We now introduce our reduction method. We create a single-resource system. In this system, each task corresponds to a set of tasks that can be processed simultaneously in the original two-resource system. We denote the task that corresponds to the set containing $\{n\}$ by task $n^*$, and denote the task that corresponds to the set $\{1,2\}$, which is the only set that contains more than one tasks, by $c^*$. The set of tasks in the one-resource system is denoted by $N^*$.

We aim to construct the single-resource system so that whenever a set of jobs in the two-resource system is selected for process, the corresponding job in the single-resource system is also selected. We observe that, when a job of task 1 in the two-resource system is selected for processing, a feasibility-optimal policy should also select the job of task 2 for processing as long as the job of task 2 is not completed yet. Therefore, if the policy selects only the job of task 1 for processing at some point in a frame, we have that the processing time of the job of task 1 is larger than that of the job of task 2 in this frame, which occurs with probability $\frac{\lambda_1}{\lambda_1+\lambda_2}$. We hence set the probability that task $1^*$ generates a job in a frame to be $\frac{\lambda_1}{\lambda_1+\lambda_2}$. Similarly, we set the probability that task $2^*$ generates a job in a frame to be $\frac{\lambda_2}{\lambda_1+\lambda_2}$. Moreover, in a frame, only one of task $1^*$ and task $2^*$ generates a job. Tasks other than $1^*$ and $2^*$ generate jobs in every frame.

Next, we discuss the processing times of jobs in the single-resource system. In a frame, let $t_n$ be the processing time of the job of task $n$ in the two-resource system. We have that $t_n$ is an exponential random variable with mean $1/\lambda_n$. Moreover, given $t_1>t_2$, $(t_1-t_2)$ is an exponential random variable with mean $1/\lambda_1$. Therefore, we set the processing times of jobs of task $1^*$ in the single-resource system to be exponential random variables with mean $1/\lambda_1$. Similarly, the processing times of jobs of task $2^*$ are exponential random variables with mean $1/\lambda_2$. Further, we have that $\min\{t_1,t_2\}$, which is the amount of time that both jobs of task 1 and task 2 can be processed simultaneously, is an exponential random variable with mean $\frac{1}{\lambda_1+\lambda_2}$. Hence, we set the processing times of jobs of task $c^*$ to be exponential random variables  with mean $\frac{1}{\lambda_1+\lambda_2}$. Finally, the processing times of jobs of other tasks $n^*$ are exponential random variables with mean $1/\lambda_n$.

We then address the timely-throughput requirements in the single-resource system. Let $q_{n^*}$ be the timely-throughput requirement of task $n^*$. The event of the completion of a job of task $c^*$ corresponds to the event that one of the jobs of task 1 and task 2 is completed in the two-resource system. When the event occurs, the probability that the job completion is of task 1 is $\frac{\lambda_2}{\lambda_1+\lambda_2}$, and the probability that the job completion is of task 2 is $\frac{\lambda_2}{\lambda_1+\lambda_2}$. The event of the completion of a job of task $n^*$ other than $c^*$ corresponds to a job completion of task $n$ in the two-resource system. Therefore, we choose $q_{1^*}, q_{2^*},\dots, q_{c^*}$ such that
\begin{align}
\frac{\lambda_2}{\lambda_1+\lambda_2}q_{c^*}+q_{1^*}&\geq q_1, \label{eq:reduction:1}\\
\frac{\lambda_1}{\lambda_1+\lambda_2}q_{c^*}+q_{2^*}&\geq q_2,  \label{eq:reduction:2}\\
q_{n^*}&\geq q_n, \forall n\notin\{1,2,c\}.   \label{eq:reduction:3}
\end{align}
Finally, for the single-resource system to be feasible, we require that
\begin{align}
\sum_{n^*\in S^*}\frac{q_{n^*}}{\lambda_{n^*}} + E[I_{S^*}]\leq T, \forall S^*\subseteq \{1^*, 2^*, \dots, c^*\},  \label{eq:reduction:4}\\
q_{n^*}\geq 0, \forall n^*\in\{1^*, 2^*, \dots, c^*\},\label{eq:reduction:5}
\end{align}
where we set $\lambda_{c^*}=\lambda_1+\lambda_2$, and $\lambda_{n^*}=\lambda_n$, for all $n^*$ other than $c^*$.

We show that the two-resource system is feasible if there exists a corresponding feasible single-resource system.

\begin{theorem}
The two-resource system is feasible if and only if there exists a vector $[q_{1^*}, q_{2^*}, \dots, q_{c^*}]$ that satisfy (\ref{eq:reduction:1})--(\ref{eq:reduction:5}).
\end{theorem}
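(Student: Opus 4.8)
The plan is to treat the constructed single-resource system as an intermediary and to reduce everything to Theorem \ref{l2}. The first observation is that constraints (\ref{eq:reduction:4})--(\ref{eq:reduction:5}) are \emph{exactly} the feasibility conditions of Theorem \ref{l2} applied to the constructed single-resource system, once we identify its implied work loads as $w_{n^*}=q_{n^*}/\lambda_{n^*}$ with $\lambda_{c^*}=\lambda_1+\lambda_2$. Hence the existence of a vector $[q_{1^*},\dots,q_{c^*}]$ satisfying (\ref{eq:reduction:4})--(\ref{eq:reduction:5}) is equivalent to the existence of a set of single-resource timely-throughput requirements that the single-resource system can fulfill. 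The remaining constraints (\ref{eq:reduction:1})--(\ref{eq:reduction:3}) encode how completions in the single-resource system translate back into completions of the original tasks. Thus the theorem reduces to showing that the two systems are feasibility-equivalent under this translation, which I would establish in the two directions separately, using Lemma \ref{l1} to move freely between timely-throughputs and implied work loads.

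For necessity (``only if''), I would start from a work-conserving, feasibility-optimal policy for the two-resource system, which exists by the remark that feasibility-optimality can be achieved without idling. We may assume without loss of optimality that it processes a job of task $1$ and a job of task $2$ concurrently whenever both are present, so that neither resource sits idle while its partner runs. I would then couple its evolution with the single-resource system: the interval during which tasks $1$ and $2$ run together has length $\min\{t_1,t_2\}$, which is exponential with rate $\lambda_1+\lambda_2$ and is identified with the service of a job of $c^*$, while the residual interval in which only one of them continues alone is identified with the service of a job of $1^*$ or $2^*$; the memoryless property guarantees that this residual is again exponential with the correct rate and that the two systems see statistically identical service requirements. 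Defining $q_{n^*}$ to be the timely-throughputs realized in the coupled single-resource system, constraints (\ref{eq:reduction:4})--(\ref{eq:reduction:5}) hold because these realized throughputs are achieved by an admissible single-resource policy and therefore satisfy Theorem \ref{l2}, while (\ref{eq:reduction:1})--(\ref{eq:reduction:3}) hold because every completion of task $1$ arises either from a $c^*$-completion in which task $1$ finished first (occurring among $c^*$-completions with the splitting probability appearing in (\ref{eq:reduction:1})) or from a $1^*$-completion, and analogously for task $2$ and for the remaining tasks.

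For sufficiency (``if''), I would run the argument in reverse. Given a vector satisfying (\ref{eq:reduction:1})--(\ref{eq:reduction:5}), constraints (\ref{eq:reduction:4})--(\ref{eq:reduction:5}) together with Theorem \ref{l2} make the single-resource system feasible, so the Largest Debt First policy of Theorem \ref{theorem:opt} fulfills it. I would then map that policy back to the two-resource system through the same coupling: a scheduled job of $c^*$ becomes the concurrent processing of tasks $1$ and $2$, and a scheduled job of $1^*$ (resp. $2^*$) becomes the solitary continuation of task $1$ (resp. $2$). The guaranteed single-resource throughputs $q_{n^*}$ then induce, via the completion correspondence, timely-throughputs for the original tasks whose values are precisely the left-hand sides of (\ref{eq:reduction:1})--(\ref{eq:reduction:3}); the inequalities therefore certify that every requirement $q_n$ is met.

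The main obstacle is making this coupling rigorous. One must verify, using the memoryless property and the fact that processing times are revealed only upon completion, that decomposing concurrent service into a $c^*$-phase of rate $\lambda_1+\lambda_2$ followed by a residual single-task phase reproduces the correct joint law of completions, including the splitting of each $c^*$-completion into a task-$1$ or task-$2$ completion with the probabilities in (\ref{eq:reduction:1})--(\ref{eq:reduction:2}); and that the idle-time quantities $E[I_{S^*}]$ of the constructed system genuinely coincide with those generated by the two-resource dynamics, so that Theorem \ref{l2} may legitimately be invoked. Handling preemption and the bookkeeping that only one of $1^*,2^*$ is ``generated'' per frame, according to which of $t_1,t_2$ is larger, is the delicate part the formal proof must carry out.
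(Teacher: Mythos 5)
Your necessity direction matches the paper's: take a fulfilling two-resource policy, assume without loss of optimality that tasks $1$ and $2$ are always processed concurrently when both are present, couple the dynamics with the constructed single-resource system via the memoryless decomposition $\min\{t_1,t_2\}$ plus residual, and read off the realized throughputs as the vector $[q_{1^*},\dots,q_{c^*}]$. That part is fine.

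The sufficiency direction has a genuine gap. You take \emph{any} vector satisfying (\ref{eq:reduction:1})--(\ref{eq:reduction:5}), obtain a fulfilling single-resource policy, and then "map that policy back" by declaring that a scheduled job of $1^*$ becomes the solitary continuation of task $1$. But a job of $1^*$ only has meaning in the two-resource system as the residual processing of task $1$ \emph{after} the concurrent phase has ended, i.e., after the job of $c^*$ has been completed in that frame. A generic fulfilling policy for the single-resource system is free to schedule the job of $1^*$ or $2^*$ while the job of $c^*$ is still unfinished, and such a schedule corresponds to no admissible action in the two-resource system. Your closing paragraph flags the probabilistic coupling and the generation bookkeeping as the delicate points, but not this precedence constraint, which is the actual crux. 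The paper resolves it by choosing, among all vectors satisfying (\ref{eq:reduction:1})--(\ref{eq:reduction:5}), the one with the \emph{largest} $q_{c^*}$; a perturbation argument (increase $q_{c^*}$ by $\epsilon$, decrease $q_{1^*},q_{2^*}$ accordingly) then produces a tight subset $\hat{S^*}$ containing $c^*$ but not $1^*,2^*$ with $\sum_{n^*\in\hat{S^*}}q_{n^*}/\lambda_{n^*}+E[I_{\hat{S^*}}]=T$, which forces any fulfilling policy to spend zero long-run time on $1^*$ or $2^*$ before $c^*$ completes. Without the maximality choice and this tightness argument, your reverse mapping is not well defined, so the sufficiency proof as proposed does not go through.
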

\begin{proof}
We first show that the existence of the desired vector $[q_{1^*}, q_{2^*}, \dots, q_{c^*}]$ is necessary for the two-resource system to be feasible. Suppose the two-resource system is feasible and is fulfilled by a policy $\eta$. We can assume that when $\eta$ schedules the job of task 1, it also schedules the job of task 2 as long as it has not been completed yet, and vice versa. As explained above, we can construct the one-resource system so that whenever $\eta$ schedules a subset of jobs, the corresponding job in the single-resource system is scheduled. We choose $q_{n^*}$ to be the resulting timely-throughput of task $n^*$ in the single-resource system. Then, the vector $[q_{1^*},q_{2^*},\dots, q_{c^*}]$ is feasible and satisfy (\ref{eq:reduction:4})--(\ref{eq:reduction:5}), as they are achieved by $\eta$. Moreover, we have that, in the two-resource system, the timely-throughput of task 1 is $\frac{\lambda_2}{\lambda_1+\lambda_2}q_{c^*}+q_{1^*}$, that of task 2 is $\frac{\lambda_1}{\lambda_1+\lambda_2}q_{c^*}+q_{2^*}$, and that of task $n^*$ is $q_{n^*}$, for all other $n^*$. As $\eta$ fulfills the two-resource system, (\ref{eq:reduction:1})--(\ref{eq:reduction:3}) are also satisfied.

Next, we show that the existence of a desired vector $[q_{1^*}, q_{2^*}, \dots, q_{c^*}]$ is also sufficient for the two-resource system to be feasible. Suppose there exists some vectors that satisfy (\ref{eq:reduction:1})--(\ref{eq:reduction:5}), we choose $[q_{1^*}, q_{2^*}, \dots, q_{c^*}]$ to be the one with the largest $q_{c^*}$ among those that satisfy (\ref{eq:reduction:1})--(\ref{eq:reduction:5}). Since $[q_{1^*}, q_{2^*}, \dots, q_{c^*}]$ is feasible for the single-resource system, there exists a policy $\eta^*$ that fulfills the system. Similar to the previous paragraph, we only need to show that the scheduling decisions of $\eta^*$ correspond to ones for the two-resource system.

Recall that scheduling the job of task $1^*$, or of task $2^*$, corresponds to the event that the job of task $1$, or of task $2$ is scheduled after the job of task $2$, or of task $1$ is completed, respectively. Therefore, a schedule for the single-resource system does not correspond to any schedule for the two-resource system if the job of task $1^*$ or task $2^*$ is scheduled before the job of task $c^*$ is completed. It is easy to check that all other schedules for the single-resource system correspond to ones for the two-resource system. Hence, it suffices to show that, under $\eta^*$, the long-term average amount of time that a job of task $1^*$ or $2^*$  is scheduled before the completion of a job of task $c^*$ is zero.

Consider a vector $[q'_{1^*},q'_{2^*},\dots, q'_{c^*}]$ with $q'_{c^*}=q_{c^*}+\epsilon$, $q'_{1^*}=(q_{1^*}-\frac{\lambda_2}{\lambda_1+\lambda_2}\epsilon)^+$, $q'_{2^*}=(q_{2^*}-\frac{\lambda_1}{\lambda_1+\lambda_2}\epsilon)^+$, and $q'_{n^*}=q_{n^*}$, for all other $n^*$, for some $\epsilon>0$, where $x^+:=\max\{x,0\}$. Since we choose the vector $[q_{1^*},q_{2^*},\dots, q_{c^*}]$ to be one that has the largest $q_{c^*}$ among those that satisfy (\ref{eq:reduction:1})--(\ref{eq:reduction:5}), $[q'_{1^*},q'_{2^*},\dots, q'_{c^*}]$ does not satisfy (\ref{eq:reduction:1})--(\ref{eq:reduction:5}). However, we have
\begin{align}
\frac{\lambda_2}{\lambda_1+\lambda_2}q'_{c^*}+q'_{1^*}&\geq \frac{\lambda_2}{\lambda_1+\lambda_2}q_{c^*}+q_{1^*}\geq q_1,\\
\frac{\lambda_1}{\lambda_1+\lambda_2}q'_{c^*}+q'_{2^*}&\geq\frac{\lambda_1}{\lambda_1+\lambda_2}q_{c^*}+q_{2^*}\geq q_2,\\
q'_{n^*}&=q_{n^*}\geq q_n, \forall n^*\notin\{1^*,2^*,c^*\},\\
q'_{n^*}&\geq0, \forall n^*.
\end{align}
That is, $[q'_{1^*},q'_{2^*},\dots, q'_{c^*}]$ satisfy (\ref{eq:reduction:1})--(\ref{eq:reduction:3}) and (\ref{eq:reduction:5}). Therefore, there exists some subset $\hat{S^*}$ such that $\sum_{n^*\in\hat{S^*}}\frac{q'_{n^*}}{\lambda_{n^*}}+E[I_{\hat{S^*}}]>T$. As $q'_{n^*}\leq q_{n^*}$, for all $n^*$ other than $c^*$, we have $c^*\in \hat{S^*}$. Further, it is easy to check that, for every subset $S^*$ that either contains $1^*$, $2^*$, or both, we have either $\sum_{n^*\in S^*}\frac{q'_{n^*}}{\lambda_{n^*}}+E[I_{S^*}]\leq \sum_{n^*\in S^*}\frac{q_{n^*}}{\lambda_{n^*}}+E[I_{S^*}]\leq T$, or $q'_{n^*}=0$ for $n^*\in S^*\cap\{1^*,2^*\}$. As $E[I_{S^*}]$, the expected amount of forced idle time when the system only works on a subset $S^*$ of tasks, is non-increasing with $S^*$, we can conclude that there exists a subset $\hat{S^*}$ such that $c^*\in \hat{S^*}$, $\{1^*, 2^*\}\cap \hat{S^*}=\phi$, and $\sum_{n^*\in\hat{S^*}}\frac{q'_{n^*}}{\lambda_{n^*}}+E[I_{\hat{S^*}}]>T$. Moreover, such a subset $\hat{S^*}$ exists for every positive $\epsilon$. Hence, there exists a subset $\hat{S^*}$ such that $c^*\in \hat{S^*}$, $\{1^*, 2^*\}\cap \hat{S^*}=\phi$, and $\sum_{n^*\in\hat{S^*}}\frac{q_{n^*}}{\lambda_{n^*}}+E[I_{\hat{S^*}}]=T$.

Recall that $\frac{q_{n^*}}{\lambda_{n^*}}$ is the amount of time needed by task $n^*$ in order to obtain a timely-throughput of $q_{n^*}$, and $T-E[I_{\hat{S^*}}]$ can be interpreted as the amount of time that the system work on tasks in $\hat{S^*}$ if the system never schedules jobs of tasks not in $\hat{S^*}$ before all jobs of tasks in $\hat{S^*}$ are completed.

Now, suppose that, under $\eta^*$, the long-term average amount of time that a job of task $1^*$ or $2^*$  is scheduled before the completion of a job of task $c^*$ is larger than zero. As $\{1^*, 2^*\}\cap \hat{S^*}=\phi$, the amount of time that $\eta^*$ spends on tasks in $\hat{S^*}$ is less than $T-E[I_{\hat{S^*}}]=\sum_{n^*\in\hat{S^*}}\frac{q_{n^*}}{\lambda_{n^*}}$, and it is impossible to fulfill all timely-throughput requirements for tasks in $\hat{S^*}$, which leads to a contradiction. Therefore, we can conclude that all scheduling decisions by $\eta^*$ in the single-resource system correspond to ones in the two-resource system, and the proof is completed.
\end{proof}

Next, we propose a scheduling policy for the two-resource system. The policy is called the \emph{Largest Total Debt First} policy and is very similar to the Largest Debt First policy for the single-resource system. We define the time-based debt the same as the single-resource system. At the beginning of each frame, the policy selects the set of jobs so that the sum of time-based debts of these jobs is maximized and processes them until at least one of the jobs is completed, at which point of time the policy selects another set of jobs that maximize the sum of time-based debts, and so on. We show that the Largest Total Debt First policy is feasibility-optimal for the two-resource system.

\begin{theorem}
The Largest Total Debt First policy is feasibility-optimal for the two-resource system.
\end{theorem}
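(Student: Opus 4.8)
The plan is to prove feasibility-optimality by showing that the Largest Total Debt First (LTDF) policy on the two-resource system is, under the correspondence of the reduction method, exactly the Largest Debt First (LDF) policy on the equivalent single-resource system. Once this is established the conclusion is immediate: given any feasible two-resource system, the preceding reduction theorem supplies an equivalent feasible single-resource system; LDF fulfills that single-resource system by Theorem~\ref{theorem:opt}; and because the reduction is throughput-preserving, the induced LTDF schedule fulfills the two-resource system. Since the feasible two-resource system was arbitrary, LTDF fulfills all feasible two-resource systems, which is the definition of feasibility-optimality.

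The core of the argument is the correspondence of scheduling decisions. First I would recall the dictionary from the reduction: running task~$1$ and task~$2$ concurrently corresponds to running task~$c^*$; running task~$1$ (resp.\ task~$2$) alone after the other has completed corresponds to $1^*$ (resp.\ $2^*$); and running a shared job of task~$n$ corresponds to $n^*$. Using the implied-work-load identities forced by \eqref{eq:reduction:1}--\eqref{eq:reduction:2} at the feasibility boundary, for instance $w_1=\frac{\lambda_2}{\lambda_1}w_{c^*}+w_{1^*}$ and the symmetric identity for $w_2$, I would translate the two-resource debts $d_1,d_2,d_n$ into the single-resource debts $d_{1^*},d_{2^*},d_{c^*},d_{n^*}$ and verify that at each decision epoch the compatible set maximizing the total two-resource debt is precisely the present task of largest single-resource debt. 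A key structural point that makes the translation legitimate is that whenever LTDF elects to serve task~$1$ or task~$2$ it serves them together for as long as both remain uncompleted, so it never runs $1^*$ or $2^*$ before $c^*$ has completed; this is exactly the realizability condition the reduction theorem required of a single-resource schedule, and it certifies that the induced policy is a bona fide single-resource LDF schedule.

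The hard part will be this debt translation, because the two bookkeeping systems do not agree on a sample path: over a frame the time spent running tasks~$1$ and~$2$ together is charged against both $d_1$ and $d_2$ but against only $d_{c^*}$, so the quantity LTDF maximizes and the quantity LDF maximizes differ by more than a relabeling. Reconciling this requires using the reduction's in-distribution matching of processing times rather than a pathwise identity, in particular the memoryless splitting of $\min\{t_1,t_2\}$ and $(t_1-t_2)^+$ that identifies ``task~$1$ finishes after task~$2$'' (resolved during service in the two-resource system) with the event that $1^*$ rather than $2^*$ is present (fixed at frame start in the reduced system); I would also dispatch the negative-debt and work-conserving edge cases separately, since there the ``both resources'' action degenerates to a single-resource action and the comparison collapses to the ordinary LDF comparison. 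Should this correspondence prove too delicate to push through cleanly, the fallback is a direct drift argument: take $V(k)=\sum_n (d_n(k)^+)^2$, show that LTDF minimizes the one-frame expected drift among all policies by maximizing $\sum_n d_n(k)^+\gamma_n(k)$ instant-by-instant, and use the feasibility region characterized by the reduction theorem to construct a stationary randomized comparison policy that keeps the debts bounded, yielding feasibility-optimality in the same manner as Theorem~\ref{theorem:max}.
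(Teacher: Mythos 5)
Your high-level route (transfer the problem to the reduced single-resource system and invoke Theorem~\ref{theorem:opt}) is the paper's route, and you correctly isolate the crux: the debts in the two systems do not match up pathwise, because time spent serving tasks $1$ and $2$ together is credited to both $d_1$ and $d_2$ but only once in the reduced system. However, you leave that crux unresolved --- you propose reconciling the two systems' \emph{native} debts via the work-load identity $w_1=\frac{\lambda_2}{\lambda_1}w_{c^*}+w_{1^*}$ plus an in-distribution argument, and you concede this may not go through. It will not go through in that form: the coefficient $\frac{\lambda_2}{\lambda_1}$ multiplies the work loads, but the processed times satisfy the \emph{unit-coefficient} pathwise identity $\gamma_1(k)=\gamma_{c^*}(k)+\gamma_{1^*}(k)$, so the debt recursion $d_1(k)=(k-1)w_1-\sum_j\gamma_1(j)$ does not decompose as $\frac{\lambda_2}{\lambda_1}d_{c^*}+d_{1^*}$ or any fixed linear combination of the reduced system's own debts. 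Chasing the distributional matching of $\min\{t_1,t_2\}$ and $(t_1-t_2)^+$ does not repair this, because the quantity to be maximized is a per-frame conditional expectation given the realized debt history, not a stationary average.

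The missing idea is that one should not use the reduced system's intrinsic debts at all. The paper instead \emph{defines} the priority weights of the reduced system directly from the two-resource debts, setting $d_{c^*}(k):=d_1(k)^++d_2(k)^+$ and $d_{n^*}(k):=d_n(k)^+$ for $n^*\neq c^*$. With the pathwise identities $\gamma_1=\gamma_{c^*}+\gamma_{1^*}$, $\gamma_2=\gamma_{c^*}+\gamma_{2^*}$, $\gamma_n=\gamma_{n^*}$, one then gets the exact algebraic identity
\begin{equation*}
\sum_{n^*}\gamma_{n^*}(k)\,d_{n^*}(k)=\gamma_{c^*}\bigl(d_1^++d_2^+\bigr)+\gamma_{1^*}d_1^++\gamma_{2^*}d_2^++\sum_{n\notin\{1,2\}}\gamma_n d_n^+=\sum_n\gamma_n(k)\,d_n(k)^+,
\end{equation*}
so maximizing the reduced-system objective (which the greedy largest-weight-first rule does, by Theorem~\ref{theorem:opt}) is \emph{identical}, frame by frame, to maximizing $E\{\sum_n\gamma_n(k)d_n(k)^+\}$, which is sufficient for feasibility-optimality by Theorem~\ref{theorem:max}. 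No distributional reconciliation and no appeal to the feasibility boundary of \eqref{eq:reduction:1}--\eqref{eq:reduction:2} is needed. Your fallback quadratic-Lyapunov drift argument is workable in principle, but it amounts to re-proving Theorem~\ref{theorem:max} from scratch rather than exploiting it; the weight-transfer identity above is the short path.
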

\begin{proof}
Let $d_n(k)$ be the time-based debt of task $n$ in the $k^{th}$ frame, and $\gamma_n(k)$ to be the amount of time that the system processes the job of task $n$ in the $k^{th}$ frame. Theorem \ref{theorem:max} has shown that a policy that maximizes $E\{\gamma_n(k)d_n(k)^+\}$ is feasibility optimal. Moreover, it is obvious that when a policy selects the job of task 1 for processing, the value of $E\{\gamma_n(k)d_n(k)^+\}$ does not decrease if it also processes the job of task 2 whenever it is available, and vice versa. As discussed as above, a policy that processes the job of task 1, or 2, whenever it is processing the job of task 2, or 1, corresponds to a policy for the single-resource system. Hence, we only need to show that the Largest Total Debt First policy maximizes $E\{\gamma_n(k)d_n(k)^+\}$ among those that correspond to policies for the single-resource system.

We define, for the corresponding single-resource system, $d_{c^*}(k)=d_1(k)^++d_2(k)^+$, and $d_{n^*}(k)=d_n(k)^+$, for all $n^*\neq c^*$. Let $\gamma_{n^*}(k)$ be the amount of time that the job of task $n^*$ is processed under some policy $\eta^*$. We then have that, under the corresponding policy $\eta$ for the two-resource system, $\gamma_1(k)=\gamma_{c^*}(k)+\gamma_{1^*}(k)$, $\gamma_2(k)=\gamma_{c^*}(k)+\gamma_{2^*}(k)$, and $\gamma_n(k)=\gamma_{n^*}(k)$, for all $n\notin\{1,2\}$. Hence, $E\{\gamma_{n^*}(k)d_{n^*}(k)^+\}=E\{\gamma_n(k)d_n(k)^+\}$. Finally, as shown in Theorem \ref{theorem:opt}, the Largest Debt First policy for the single-resource system maximizes $E\{\gamma_{n^*}(k)d_{n^*}(k)^+\}$, and therefore, its corresponding policy in the two-resource system, which is the Largest Total Debt First policy, maximizes $E\{\gamma_n(k)d_n(k)^+\}$.
\end{proof}

\section{Conclusions}   \label{section:conclusion}

We have studied real-time Stochastic Processing Networks in this paper. We have proposed a model for real-time SPNs that jointly consider the concurrent resource usage, the hard delay bound, and the stochastic processing time of jobs, as well as the traffic patterns and timely-throughput requirements of tasks. We have addressed the problem of characterizing feasibility and scheduling jobs for single-resource systems. We have also proposed a reduction method that transforms two-resource systems into equivalent single-resource ones. Based on this method, we have proved a sharp condition for two-resource systems to be feasible. We have also proposed a simple online scheduling policy for two-resource systems that is feasibility-optimal.

\small
\bibliographystyle{ieeetr}
\bibliography{reference}
\end{document}